\documentclass[12pt,a4paper]{article}

\usepackage[utf8]{inputenc}
\usepackage[T1]{fontenc}
\usepackage[english]{babel}
\usepackage{amsmath,amssymb,amsthm}
\usepackage{booktabs}
\usepackage{geometry}
\usepackage{hyperref}

\newtheorem{definition}{Definition}[section]
\newtheorem{assumption}{Assumption}[section]
\newtheorem{proposition}{Proposition}[section]
\newtheorem{remark}{Remark}[section]

\title{The Dominance of Environment over Entity's Capabilities}
\author{Kristian Sestak\thanks{Independent researcher. Email: \texttt{kristian.sestak@gmail.com}. ORCID: \href{https://orcid.org/0009-0002-1455-5915}{0009-0002-1455-5915}.}}
\date{}

\begin{document}
\maketitle

\begin{abstract}
\noindent We present an analytical framework for the probability of individual success based on a single structural asymmetry between the capacity of an entity to explore possibilities, $k$, and the size of the possibility space offered by the environment, $n$, where $k \ll n$. We introduce an effective density $\rho_{\text{eff}}$ of favorable possibilities accessible to a given entity, derive the probability of success as $\mathbb{P} \approx 1-(1-\rho_{\text{eff}})^k$, and decompose its variance across a population. We show that while the elasticities $\varepsilon_\rho$ and $\varepsilon_k$ are comparable, the variance of outcomes is dominated by $\operatorname{Var}(\ln \rho_{\text{eff}})$ whenever it exceeds $\operatorname{Var}(\ln k)$. A back-of-envelope calibration based on published inequality and productivity data indicates this condition holds by two to three orders of magnitude. The framework provides an analytical complement to the simulation result of Pluchino, Biondo and Rapisarda (2018), and offers a unified structural account of geographic inequality, intergenerational mobility and accessibility-based discrimination as special cases of the narrowing of the accessible set $A(E,P)$.
\end{abstract}

\medskip
\noindent\textbf{Keywords:} inequality of opportunity; combinatorial asymmetry; intergenerational mobility; accessibility; meritocracy; structural discrimination.

\smallskip
\noindent\textbf{JEL classification:} D31, D63, J62, O15.

\smallskip
\noindent\textbf{MSC 2020:} 91D10, 60C05, 91B15.

\section{Introduction}

Contemporary society is dominated by the belief that a person's success is mainly the result of their talent, skill, and hard work. This view, the so-called meritocratic ideal, says: ''If you are capable enough and try hard enough, you will succeed.'' It is a powerful narrative because it gives people the feeling of control over their own destiny.

It must be acknowledged from the outset that a serious meritocratic position does not claim that the environment plays no role. It claims that \emph{under controlled environmental conditions} (equal circumstances), capabilities dominate. This article does not directly refute that claim. We focus on a different, stronger statement that is often implicitly associated with the meritocratic ideal in popular discourse: that \emph{in the real world with unequal conditions}, outcomes are dominantly determined by capabilities. It is this second statement that is our target.

However, when we look at the world around us, we see something different. Two equally capable people achieve dramatically different outcomes simply because they were born in different places, at different times, into different families. A programmer in Silicon Valley and an equally talented programmer in a remote village do not have the same chances, not because their capabilities differ, but because their surroundings offer fundamentally different opportunities.

The question is: can this be proved, not just claimed? Can it be mathematically shown that the environment plays a greater role than an individual's capabilities?

This question is not new. Bourdieu~\cite{bourdieu1986} analyzed the role of cultural and social capital in the reproduction of social status. Chetty et al.~\cite{chetty2014} empirically demonstrated strong differences in intergenerational mobility across U.S. regions. Milanovic~\cite{milanovic2015} showed that more than half of global income inequality can be explained by country of residence and within-country income position alone. Pluchino et al.~\cite{pluchino2018} used an agent-based simulation to show that the distribution of wealth in a population with normally distributed talent is dominantly determined by luck, not talent.

Our contribution is positioned relative to Pluchino~\cite{pluchino2018} as follows: while Pluchino demonstrates the dominance of the environment \emph{via simulation} under a specific mechanism of encounters with ''lucky'' and ''unlucky'' events, we show the same conclusion \emph{analytically} as a consequence of a single structural asymmetry $k \ll n$, without having to specify a concrete process. The two views are therefore complementary: simulation gives a quantitative picture, while the formal framework shows why the result is robust with respect to the details of the mechanism.

\paragraph{Scope.} The contribution of this paper is the identification of a minimal structural condition — the combinatorial asymmetry $k \ll n$ together with the empirical inequality $\operatorname{Var}(\ln \rho_{\text{eff}}) \gg \operatorname{Var}(\ln k)$ — under which the dominance of environmental variance is an analytical consequence rather than a postulate. The mathematical apparatus is intentionally minimal so that the conclusion does not depend on the details of any specific mechanism.

The difficulty is that the environment is dynamic, varying in time, space, politics, economics, culture, and chance. Directly comparing ''the effect of the environment'' with ''the effect of capabilities'' is extremely hard. We therefore take a different approach: instead of measuring specific environments, we look at the \emph{structural asymmetry} between how many possibilities the world offers and how many of them a single person can ever explore.

\section{Problem description}

How to mathematically prove that the status and success of an individual depend primarily on the environment (circumstances), not on their own capabilities?

A direct proof is hard because the environment is dynamic, varying across time and space. We therefore adopt an indirect approach via the \emph{combinatorial asymmetry of possibilities}.

\section{Formalization}

\subsection{Definitions}

\begin{definition}[Entity]
Let $E$ denote an entity (an individual).
\end{definition}

\begin{definition}[Environment]
Let $P$ denote the environment (surroundings, society, era).
\end{definition}

\begin{definition}[Entity capacity]
Let $k(E, t) \in \mathbb{N}$ be the \emph{entity's capacity} at time $t$, i.e.\ the number of possibilities the entity $E$ can explore. Capacity is determined by the capabilities, energy, and lifespan of the entity.
\end{definition}

\begin{remark}[Continuous approximation]
Capacity $k$ is by definition a discrete quantity ($k \in \mathbb{N}$). In what follows we nevertheless treat it as a continuous variable (derivatives, elasticities). This is a standard approximation that is justified for sufficiently large $k$, where discrete differences $\Delta k = 1$ are well approximated by a continuous derivative.
\end{remark}

\begin{definition}[Possibility space of the environment]
Let $n(P, t) \in \mathbb{N}$ be the total number of possibilities the environment $P$ offers at time $t$ (jobs, relationships, opportunities, etc.).
\end{definition}

\begin{definition}[Favorable possibilities and success]
Let $F \subseteq \{1, \dots, n\}$ be the set of \emph{favorable possibilities}. An entity achieves \emph{success} if at least one of the possibilities it explores during its lifetime belongs to $F$. Success is therefore a binary variable. The model is invariant with respect to the concrete success criterion (employment, economic security, partnership, etc.); only the set $F$ changes, not the structure of the argument.
\end{definition}

\begin{definition}[Density of favorable possibilities]
We define the density of favorable possibilities as
\begin{equation}
    \rho \;=\; \frac{|F|}{n}.
\end{equation}
\end{definition}

\subsection{Key structural assumption}

\begin{assumption}[Combinatorial asymmetry]\label{asp:asymetria}
The entity's capacity is orders of magnitude smaller than the number of possibilities in the environment:
\begin{equation}
    k(E, t) \ll n(P, t).
\end{equation}
The entity can explore only a negligible fraction of what the environment offers.
\end{assumption}

\subsection{Dependence of favorable possibilities}

The number of favorable possibilities depends on both the entity and the environment:
\begin{equation}
    |F| = f(E, P).
\end{equation}
However, the density $\rho = |F|/n$ is primarily a property of the environment (how many good opportunities the environment contains at all), whereas $k$ is primarily a property of the entity (how many attempts it can make).

\subsection{Entity attributes and possibility accessibility}

So far we have assumed that the entity can in principle encounter any of the $n$ possibilities of the environment. In reality this is not so. The entity is not characterized only by its capacity $k$ (skill), but also by a set of additional attributes that open some possibilities and close others.

\begin{definition}[Attribute vector of the entity]
Let $\mathbf{a}(E) = (a_1, a_2, \dots, a_m)$ be the attribute vector of entity $E$, where each $a_i$ represents a property such as social origin, education, language, age, gender, ethnicity, health status, geographic location, connections, etc.
\end{definition}

\begin{definition}[Accessible set]
Let $A(E, P) \subseteq \{1, \dots, n\}$ be the set of \emph{accessible possibilities}, i.e.\ those that are at all available to the entity $E$ given its attributes $\mathbf{a}(E)$ and the conditions of the environment $P$. It holds that:
\begin{equation}
    |A(E, P)| \;\leq\; n(P, t).
\end{equation}
\end{definition}

The entity therefore does not choose from the whole space of $n$ possibilities, but only from the subset $A$ to which it has access. The favorable possibilities that it can actually encounter are:
\begin{equation}
    F_A \;=\; F \cap A(E, P), \qquad |F_A| \;\leq\; |F|.
\end{equation}

\begin{definition}[Effective density]
We define the \emph{effective density} of favorable possibilities for a specific entity as:
\begin{equation}
    \rho_{\text{eff}}(E, P) \;=\; \frac{|F_A|}{|A(E, P)|} \;=\; \frac{|F \cap A(E, P)|}{|A(E, P)|}.
\end{equation}
\end{definition}

\begin{remark}
The effective density $\rho_{\text{eff}}$ may differ substantially from the overall density $\rho$. The environment can be objectively rich in opportunities ($\rho$ is high), but if these opportunities are accessible only to entities with certain attributes, then for other entities $\rho_{\text{eff}} \ll \rho$.

For example: the labor market in Silicon Valley offers thousands of positions ($\rho$ high), but for a person without a visa, without knowledge of English, or without formal education, $|A|$ is drastically narrowed and $\rho_{\text{eff}}$ drops to a fraction of its original value.
\end{remark}

The corrected model of the probability of success then reads:
\begin{equation}\label{eq:approx_eff}
    \boxed{\;\mathbb{P}(\text{success}) \;\approx\; 1 - (1 - \rho_{\text{eff}})^k\;}
\end{equation}

\begin{assumption}[Extended combinatorial asymmetry]\label{asp:rozsirena}
The entity's attributes $\mathbf{a}(E)$ are largely determined by the environment (family, country of birth, social class), not by its own effort. Therefore the size of the accessible set $|A(E, P)|$ is primarily a property of the environment. This means that $\rho_{\text{eff}}$ depends on the environment twice over:
\begin{enumerate}
    \item through the overall density of opportunities $\rho$ (how many good possibilities exist),
    \item through accessibility $|A|/n$ (how many of them are at all available to the entity given the attributes the environment has assigned to it).
\end{enumerate}
\end{assumption}

\section{Formal model}

\subsection{Probability of success — basic model}

We first present the basic model without regard to accessibility. If the entity randomly chooses $k$ possibilities from $n$ and $|F|$ of them are favorable, the probability that it encounters at least one favorable possibility is given by the hypergeometric distribution:

\begin{remark}[Randomness assumption]
The assumption of random selection is an idealization; a real individual searches purposefully. Purposefulness does not change the structure of the argument, however: if the entity is able to recognize favorable possibilities, it effectively increases its $k$ (it has more ''useful'' attempts), but $k \ll |A(E,P)|$ still holds. The dominance of the environment expressed in~\eqref{eq:hlavny} therefore remains in force. We choose randomness only for the analytical simplicity of the closed-form expression~\eqref{eq:approx}.
\end{remark}

\begin{equation}\label{eq:presny}
    \mathbb{P}(\text{success}) = 1 - \frac{\binom{n - |F|}{k}}{\binom{n}{k}}.
\end{equation}

For $k \ll n$ the binomial approximation holds:

\begin{equation}\label{eq:approx}
    \mathbb{P}(\text{success}) \;\approx\; 1 - (1 - \rho)^k
\end{equation}

where $\rho = |F|/n$ is the density of favorable possibilities in the environment and $k$ is the entity's capacity.

\begin{remark}
For very small $\rho$ and small $k$ a further simplified approximation holds:
\begin{equation}\label{eq:linear}
    \mathbb{P}(\text{success}) \;\approx\; k \cdot \rho.
\end{equation}
The probability of success is therefore approximately the product of the entity's capabilities and the quality of the environment.
\end{remark}

\subsection{Probability of success — full model with accessibility}

After incorporating the entity's attributes (Definitions~3.7, 3.8, 3.9) the entity chooses not from the whole space $n$, but only from the accessible set $A(E, P)$. The full model therefore uses the effective density $\rho_{\text{eff}}$:

\begin{equation}\label{eq:uplny}
    \boxed{\;\mathbb{P}(\text{success}) \;\approx\; 1 - (1 - \rho_{\text{eff}})^k\;}
\end{equation}

In the subsequent sensitivity analysis we work with the generic symbol $\rho$, where in the full model it is understood to mean $\rho_{\text{eff}}$.

\section{Sensitivity analysis}

\subsection{Partial derivatives (absolute sensitivity)}

The sensitivity of the probability of success to changes in the environment and in capabilities:

\begin{align}
    \frac{\partial \mathbb{P}}{\partial \rho} &\approx k \cdot (1-\rho)^{k-1} & &\text{(sensitivity to environment),} \label{eq:dP_drho} \\[6pt]
    \frac{\partial \mathbb{P}}{\partial k} &\approx -\ln(1-\rho) \cdot (1-\rho)^k & &\text{(sensitivity to capabilities).} \label{eq:dP_dk}
\end{align}

\subsection{Elasticity (relative sensitivity)}

Elasticity measures the percentage change in success given a $1\,\%$ change in a given parameter:

\begin{align}
    \varepsilon_\rho &= \frac{\partial \ln \mathbb{P}}{\partial \ln \rho} = \frac{\rho}{\mathbb{P}} \cdot \frac{\partial \mathbb{P}}{\partial \rho} \approx \frac{k\rho \cdot (1-\rho)^{k-1}}{1-(1-\rho)^k}, \label{eq:elast_rho} \\[6pt]
    \varepsilon_k &= \frac{\partial \ln \mathbb{P}}{\partial \ln k} = \frac{k}{\mathbb{P}} \cdot \frac{\partial \mathbb{P}}{\partial k} \approx \frac{-k\ln(1-\rho) \cdot (1-\rho)^k}{1-(1-\rho)^k}. \label{eq:elast_k}
\end{align}

\begin{remark}
For small $k\rho$ (the typical case, i.e.\ few attempts and low density):
\begin{equation}
    \varepsilon_\rho \approx 1, \qquad \varepsilon_k \approx \frac{k\rho \cdot e^{-k\rho}}{1 - e^{-k\rho}}.
\end{equation}
Both elasticities are comparable. The model by itself therefore does not show dominance of either parameter. The crucial point is that the claim does not depend on the form of the formula but on the \emph{dispersion of parameters in the real world}.
\end{remark}

\section{Main argument}

\begin{proposition}[Dominance of the environment]\label{prop:dominancia}
Since the elasticities $\varepsilon_\rho$ and $\varepsilon_k$ are comparable, what dominates the outcome is decided by the variance of the parameters across the population. In the full model with accessibility (where $\rho$ denotes $\rho_{\text{eff}}$) it holds that: if
\begin{equation}\label{eq:hlavny}
    \boxed{\;\operatorname{Var}(\ln \rho_{\text{eff}}) \;\gg\; \operatorname{Var}(\ln k)\;}
\end{equation}
then the variance of the outcome (success) depends mainly on the environment $P$ and the entity attributes that are determined by the environment.
\end{proposition}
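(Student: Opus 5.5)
The plan is a first-order (delta-method) variance decomposition of $\ln \mathbb{P}$ in the two log-parameters $x = \ln \rho_{\text{eff}}$ and $y = \ln k$. Write $\mathbb{P} = g(x,y) = 1-(1-e^{x})^{e^{y}}$ from \eqref{eq:uplny}. Linearize $\ln g$ around the population means $(\bar x, \bar y)$:
\begin{equation*}
\ln \mathbb{P} \approx \ln g(\bar x,\bar y) + \varepsilon_\rho\,(x-\bar x) + \varepsilon_k\,(y-\bar y),
\end{equation*}
where the coefficients are exactly the elasticities \eqref{eq:elast_rho} and \eqref{eq:elast_k}, evaluated at the mean point. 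Taking variances gives
\begin{equation*}
\operatorname{Var}(\ln \mathbb{P}) \approx \varepsilon_\rho^2 \operatorname{Var}(x) + \varepsilon_k^2 \operatorname{Var}(y) + 2\varepsilon_\rho\varepsilon_k \operatorname{Cov}(x,y).
\end{equation*}
This is the decomposition of outcome variance into an environmental term, a capability term and an interaction term.

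Next I would bound the elasticities using the remark following \eqref{eq:elast_k}. In the regime of small $k\rho$ we have $\varepsilon_\rho \approx 1$ and $0<\varepsilon_k \le 1$, since $u e^{-u}/(1-e^{-u}) \le 1$ for $u>0$. More generally, I would check from \eqref{eq:elast_rho}–\eqref{eq:elast_k} that both elasticities are bounded positive quantities of comparable size, say $c \le \varepsilon_k/\varepsilon_\rho \le C$ on the relevant range. Then the ratio of the capability term to the environmental term is at most $C^2\operatorname{Var}(y)/\operatorname{Var}(x)$, which is $\ll 1$ under \eqref{eq:hlavny}.

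For the cross term I would use Cauchy–Schwarz:
\begin{equation*}
|\operatorname{Cov}(x,y)| \le \sqrt{\operatorname{Var}(x)\operatorname{Var}(y)}.
\end{equation*}
So the cross term is at most $2C\,\operatorname{Var}(x)\sqrt{\operatorname{Var}(y)/\operatorname{Var}(x)}$, which is also negligible relative to $\varepsilon_\rho^2\operatorname{Var}(x)$. Hence $\operatorname{Var}(\ln\mathbb{P}) = \varepsilon_\rho^2\operatorname{Var}(\ln\rho_{\text{eff}})\,(1+o(1))$ as $\operatorname{Var}(\ln k)/\operatorname{Var}(\ln\rho_{\text{eff}})\to 0$. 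Finally, Assumption~\ref{asp:rozsirena} lets me attribute $\operatorname{Var}(\ln\rho_{\text{eff}})$ to $P$ and to the environmentally determined attributes $\mathbf{a}(E)$, through both $\rho$ and $|A|/n$. This yields the stated conclusion.

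The main obstacle is justifying the linearization when $\operatorname{Var}(\ln\rho_{\text{eff}})$ is large, which is precisely the regime we care about. The delta method requires small dispersion, and $\varepsilon_\rho$, $\varepsilon_k$ vary along the population. I would first handle this by working within the small-$k\rho$ regime, where $\ln\mathbb{P} \approx \ln\rho_{\text{eff}} + \ln k$ holds exactly to leading order by \eqref{eq:linear}, so no Taylor expansion in the dispersion is needed. For the general case I would argue that $\ln g$ is monotone in both arguments with uniformly bounded partial derivatives in $x$ and $y$, namely the elasticities. A mean-value bound then replaces the linearization: $|\ln g(x,y)-\ln g(x,\bar y)| \le C|y-\bar y|$. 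Consequently $\operatorname{Var}(\ln\mathbb{P})$ differs from $\operatorname{Var}(\ln g(x,\bar y))$ by terms controlled by $\operatorname{Var}(y)$, and the dominance follows without assuming small dispersion.
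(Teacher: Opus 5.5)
Up to the linearization, your argument is the paper's own justification, with one real improvement. Bounding the cross term by Cauchy--Schwarz removes the independence assumption, and with it the paper's sign argument about positive correlation. The price is a relative error of order $\sqrt{\operatorname{Var}(\ln k)/\operatorname{Var}(\ln\rho_{\text{eff}})}$ rather than $\operatorname{Var}(\ln k)/\operatorname{Var}(\ln\rho_{\text{eff}})$.

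\textbf{The gap.} It is in your last step, the claim that a mean-value bound gives dominance without assuming small dispersion. The bound $|\ln g(x,y)-\ln g(x,\bar y)|\le|y-\bar y|$ is true, since $\varepsilon_k<1$ everywhere. But it only caps the capability part by $\operatorname{Var}(\ln k)$. You still need $\operatorname{Var}(\ln g(x,\bar y))\gg\operatorname{Var}(\ln k)$. That requires a \emph{lower} bound on $\varepsilon_\rho=\partial_x\ln g$ along the range of $x$, and in general there is none. In the saturated regime $k\rho_{\text{eff}}\gg1$ one has $\varepsilon_\rho\to0$, so environmental dispersion located there barely moves $\ln\mathbb{P}$.

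The conclusion then genuinely fails. Let $\ln k=\ln K\pm\delta$ with equal probabilities, independent of $\rho_{\text{eff}}$. Let $\rho_{\text{eff}}=1/K$ with probability $1-q$ and $\rho_{\text{eff}}=1/2$ with probability $q$.
\begin{itemize}
\item The second group has $\ln\mathbb{P}\approx0$ whatever $k$ is.
\item The first group has $\ln\mathbb{P}\approx\psi(\pm\delta)$, with $\psi(v)=\ln(1-e^{-e^{v}})$.
\item $\operatorname{Var}(\ln\rho_{\text{eff}})=q(1-q)\ln^2(K/2)$, which exceeds $\operatorname{Var}(\ln k)=\delta^2$ by any factor you like once $K$ is large.
\item The between-environment variance of $\ln\mathbb{P}$ is about $\psi(0)^2q\approx0.21\,q$.
\item The within-environment (capability) variance is about $\psi'(0)^2\delta^2\approx0.34\,\delta^2$.
\end{itemize}
So for $q\ll\delta^2$ capability dominates. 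The elasticities of the two groups also differ by many orders of magnitude, so the delta-method formula evaluated at the means is far off as well.

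\textbf{The missing ingredient} is a non-saturation hypothesis. Write $\mathbb{P}=1-e^{-u}$ with $u=-k\ln(1-\rho_{\text{eff}})$. Then $\varepsilon_k=u/(e^{u}-1)$ and $\varepsilon_\rho/\varepsilon_k=\rho_{\text{eff}}/\bigl((1-\rho_{\text{eff}})(-\ln(1-\rho_{\text{eff}}))\bigr)\ge1$.

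Suppose $-\bar k\ln(1-\rho_{\text{eff}})\le U$ across the population. Then $\varepsilon_\rho(x,\bar y)\ge c_U:=U/(e^{U}-1)$, and so $\operatorname{Var}(\ln g(x,\bar y))\ge c_U^2\operatorname{Var}(\ln\rho_{\text{eff}})$. Combined with your remainder bound, this gives environmental dominance under the strengthened condition $c_U^2\operatorname{Var}(\ln\rho_{\text{eff}})\gg\operatorname{Var}(\ln k)$, with no small-dispersion assumption.

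Your small-$k\rho_{\text{eff}}$ case is the limit $c_U\to1$. The paper's use of a single pair of elasticities at the mean point tacitly presupposes this kind of uniformity.
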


\begin{proof}[Justification]
Let $S = \ln \mathbb{P}(\text{success})$. By linearization around the mean values $\bar{\rho}$, $\bar{k}$:
\begin{equation}
    S \;\approx\; S_0 \;+\; \varepsilon_\rho \cdot (\ln \rho - \ln \bar{\rho}) \;+\; \varepsilon_k \cdot (\ln k - \ln \bar{k}).
\end{equation}
In the general case:
\begin{equation}
    \operatorname{Var}(S) \;\approx\; \varepsilon_\rho^2 \cdot \operatorname{Var}(\ln \rho) \;+\; \varepsilon_k^2 \cdot \operatorname{Var}(\ln k) \;+\; 2\,\varepsilon_\rho\,\varepsilon_k \cdot \operatorname{Cov}(\ln \rho,\, \ln k).
\end{equation}

Under the assumption of independence of $\rho$ and $k$ the covariance term is zero and we obtain:
\begin{equation}
    \operatorname{Var}(S) \;\approx\; \varepsilon_\rho^2 \cdot \operatorname{Var}(\ln \rho) \;+\; \varepsilon_k^2 \cdot \operatorname{Var}(\ln k).
\end{equation}

Since $\varepsilon_\rho \approx \varepsilon_k$ (both comparable), but $\operatorname{Var}(\ln \rho) \gg \operatorname{Var}(\ln k)$, we obtain:
\begin{equation}
    \operatorname{Var}(S) \;\approx\; \varepsilon_\rho^2 \cdot \operatorname{Var}(\ln \rho),
\end{equation}
i.e.\ the variance of the outcome is dominantly determined by the variance of the environment.

\textbf{Note on correlation.} In reality $\rho$ and $k$ may be positively correlated (e.g.\ more capable individuals migrate to better environments). Then $\operatorname{Cov}(\ln \rho, \ln k) > 0$ and the covariance term further \emph{reinforces} the dominance of the environment, because it increases the total variance in the direction where environment and capabilities act in concert. The independence assumption is therefore a conservative estimate; the actual influence of the environment may be even greater.
\end{proof}

\subsection{Empirical support}

\begin{table}[h]
\centering
\begin{tabular}{@{}lll@{}}
\toprule
\textbf{Parameter} & \textbf{Dispersion} & \textbf{Source / example} \\
\midrule
$\rho$ (environment) & $10^2 - 10^4 \times$ & Milanovic~\cite{milanovic2015}, Chetty et al.~\cite{chetty2014} \\
$k$ (capabilities)   & $2 - 10 \times$      & Pluchino et al.~\cite{pluchino2018}, Hunter \& Schmidt~\cite{schmidt1998} \\
\bottomrule
\end{tabular}
\caption{Comparison of the dispersion of environment and capability parameters.}
\label{tab:rozptyl}
\end{table}

The dispersion in the quality of environments is $100\times$ to $1000\times$ greater than the dispersion in capabilities across individuals. International comparisons of wages and incomes~\cite{milanovic2015} show that an average worker in a developed country earns $50\times$ to $100\times$ more than a worker performing comparable work in the poorest countries. When converted to \emph{accessible opportunities} (positions, capital, networks), the dispersion is even greater. Conversely, the variability in cognitive abilities and work productivity among individuals within the same profession is on the order of single-digit multiples~\cite{schmidt1998}.

\subsection{Numerical illustration}

To illustrate the magnitudes, consider two hypothetical entities, one based in a high-opportunity environment ($E_H$), the other in a low-opportunity environment ($E_L$), with identical capacity $k = 50$ (number of serious career/life attempts over a working lifetime). Using rough calibration from published data~\cite{milanovic2015,chetty2014}, we take
\[
    \rho_{\text{eff}}(E_H) = 10^{-2}, \qquad \rho_{\text{eff}}(E_L) = 10^{-4}.
\]
Then from~\eqref{eq:approx}:
\begin{align*}
    \mathbb{P}(\text{success} \mid E_H) &\approx 1 - (1 - 10^{-2})^{50} \;\approx\; 0.395, \\
    \mathbb{P}(\text{success} \mid E_L) &\approx 1 - (1 - 10^{-4})^{50} \;\approx\; 0.00499.
\end{align*}
To compensate for a two-order-of-magnitude gap in $\rho_{\text{eff}}$, $E_L$ would need to solve
\[
    1 - (1 - 10^{-4})^{k'} \;=\; 0.395, \qquad \Rightarrow \qquad k' \;\approx\; 5025,
\]
i.e.\ roughly $100\times$ the capacity of $E_H$ to reach the same probability of success. No realistic variation in individual capability (Table~\ref{tab:rozptyl}, row $k$) can close a gap of this magnitude. This is the quantitative content of Proposition~\ref{prop:dominancia}.

\section{Conclusion}

\subsection{Formal summary}

Since the entity explores only a fraction of the possibilities ($k \ll n$), its outcome is dominantly determined by the density of favorable possibilities in its surroundings, i.e.\ by the environment. The entity's capabilities affect how many attempts it gets, but the environment determines the chance at each attempt. The variance of the latter is orders of magnitude greater.

\subsection{An intuitive interpretation}

Imagine that life is like searching for treasure on a huge field. Every person has a certain number of steps they can take during a lifetime — some 100, someone more capable perhaps 500. That is the difference in capabilities.

But the field has millions of places. Not even the most capable person will traverse even a fraction of them. So whether they find the treasure depends mainly on \emph{how many treasures are buried in that field}, i.e.\ on the environment.

If you live on a field where there is treasure at every hundredth place, you will almost certainly find one, whether you are average or exceptional. If you live on a field where treasure is at every millionth place, you will almost never find one, no matter how skilled you are.

And now imagine one more thing: not everyone is allowed to walk across the whole field. Some people are allowed to walk anywhere, others are forbidden entry into large parts of the field, where the treasures are most often found. Two equally skilled people, with the same number of steps, but one is allowed to walk the whole field and the other only a small patch by the edge, where there is almost no treasure. The outcome will be dramatically different, even though their capabilities are equal. Where you are let in does not depend on your skill, but on who you are, where you come from, what color your passport is, or what family you were born into.

The conclusion is simple: the difference between a ''good'' and a ''bad'' environment is a thousand to a million times greater than the difference between an ''average'' and an ''exceptional'' person. Therefore, where you are born and in what conditions you live determines your fate far more than your capabilities. Capabilities are not useless, but without a favorable environment they simply are not enough. And sometimes even a favorable environment does not help if the entity's attributes, over which it has no control, do not allow it to access those opportunities in the first place.

\subsection{Unified view of inequality phenomena}

The framework reinterprets three empirical regularities usually studied separately as special cases of the narrowing of the accessible set $A(E,P)$: (i) geographic inequality~\cite{milanovic2015} corresponds to variation of $\rho$ and $|A|$ across locations; (ii) intergenerational mobility~\cite{chetty2014} corresponds to inheritance of the attribute vector $\mathbf{a}(E)$ that determines $A(E,P)$; (iii) structural discrimination~\cite{bourdieu1986} corresponds to attribute-selective narrowing of $A(E,P)$ at fixed $\rho$. In each case, the analytical object of interest is the same: $\operatorname{Var}(\ln \rho_{\text{eff}})$ across the population.

\subsection{Limitations of the model}

The proposed framework contains several deliberate simplifications that should be stated:
\begin{itemize}
    \item \textbf{Binary success.} The model treats ''success'' as a binary event. Real success is a continuous spectrum; a generalization to a scale of values would require weighting the favorable possibilities.
    \item \textbf{Independence of attempts.} The assumption that the $k$ attempts are independent is an idealization. In reality, failure shortens the available time and reduces future $k$ (path-dependence).
    \item \textbf{Static possibility space.} The model does not consider changes of $n$, $\rho$ and $A$ over time. Dynamics (economic cycles, migration, aging) would require a time-dependent formulation.
    \item \textbf{The empirical support for Table~\ref{tab:rozptyl}} is derived from other works~\cite{milanovic2015,chetty2014,schmidt1998,pluchino2018}; a calibrated own measurement of $\rho_{\text{eff}}$ is an open problem.
\end{itemize}

\section*{Acknowledgements}
The author received no specific funding for this work and declares no competing interests.

\section*{Data availability}
No new data were generated or analyzed in this study. All empirical figures referenced in Table~\ref{tab:rozptyl} and the numerical illustration in Section~6.2 are taken from the cited published sources.


\end{document}